\documentclass[journal, onecolumn]{IEEEtran}
\usepackage{amsmath,amsfonts,amsthm}
\usepackage{algorithmic}
\usepackage{algorithm}
\usepackage{array}
\usepackage[caption=false,font=normalsize,labelfont=sf,textfont=sf]{subfig}
\usepackage{textcomp}
\usepackage{stfloats}
\usepackage{url}
\usepackage{verbatim}
\usepackage{graphicx}
\usepackage{cite}
\usepackage{makecell}
\usepackage{flushend}
\usepackage{mathrsfs}

\hyphenation{op-tical net-works semi-conduc-tor IEEE-Xplore}

\newtheorem{theorem}{Theorem}   
   
\newtheorem{proposition}[theorem]{Proposition}  
\newtheorem{corollary}[theorem]{Corollary}

\theoremstyle{definition}
\newtheorem{definition}{Definition}
\theoremstyle{remark}
\newtheorem{remark}{Remark}

\newcommand{\R}{\mathbb{R}}
\newcommand{\dd}{\mathrm{d}}
\newcommand{\renyi}{\mathcal{R}}
\renewcommand{\H}{\mathcal{H}}
\newcommand{\N}{\mathcal{N}}

\newcommand{\B}{\mathcal{B}}
\newcommand{\smep}{\mathcal{N}}
\newcommand{\sme}{\mathcal{S}}
\newcommand{\T}{\mathcal{T}}
\newcommand{\st}{\mathcal{P}}
\newcommand{\diver}{\operatorname{div}}
\newcommand{\E}{\mathcal{E}}
\newcommand{\I}{\mathcal{I}}
\newcommand{\J}{\mathcal{J}}

\newcommand{\Tr}{\operatorname{Tr}}
\newcommand{\Q}{\mathcal{Q}}

\begin{document}

\title{The concavity of generalized entropy powers}

\author{Mario Bukal\IEEEmembership{~}
\thanks{The author is with the University of Zagreb Faculty of Electrical Engineering and Computing, Unska 3, 10000 Zagreb, Croatia. Email: mario.bukal@fer.hr}
}



\maketitle

\begin{abstract}
In this note we introduce a new family of entropy powers which are related to generalized entropies, called Sharma-Mittal entropies, and we prove their concavity along diffusion processes generated by $L^2$-Wasserstein gradient flows of corresponding entropy functionals. 
This result extends the result of Savar\'e and Toscani on the concavity of R\'enyi entropy powers (\emph{IEEE Trans.~Inf.~Theory}, 2014) and reveals a connection to R\'enyi entropy power inequalities by Bobkov and Marsiglietti (\emph{IEEE Trans.~Inf.~Theory}, 2017).
\end{abstract}

\begin{IEEEkeywords}
entropy power, Sharma-Mittal entropy, con\-ca\-vi\-ty, $L^2$-Wasserstein gradient flow
\end{IEEEkeywords}

\section{Introduction}
\IEEEPARstart{E}{ntropy} power has been introduced by Shannon in his seminal paper \cite{Sha48}. Given a continuous random vector $X$ with values in $\R^d$, the entropy power $\N(X)$ is defined by 
\begin{equation}\label{def:sep}
    \N(X) = \exp\left(\frac{2}{d}\H(X)\right)
\end{equation}
where 
\begin{equation*}
    \H(X) = - \int_{\R^d}u(x)\log u(x) \dd x\,,
\end{equation*}
is the Shannon (also known as Boltzmann-Gibbs) entropy and $u$ is
the probability density of $X$ \cite{CoTh91}. The entropy power is a superadditive functional, i.e.~for any two independent random vectors $X$ and $Y$ it holds
\begin{equation}\label{ineq:theEPI}
    \N(X+Y)\geq \N(X) + \N(Y)\,,
\end{equation}
with equality if and only if $X$ and $Y$ are Gaussian random vectors with independent identically distributed components.
This is the famous entropy power inequality (EPI), which was partially proved already by Shannon in \cite{Sha48}, who used it to obtain a lower bound on the channel capacity, but the complete proof was given later by Stam \cite{Sta59}. The EPI and its refinements were subject of extensive research in information theory, to name only few \cite{Dem89, DCT91, RaSa16, Rio11,VeGu06}.
Particularly important case are Gaussian perturbations of a random vector $X$. Let $Z$ be distributed according to the standard Gaussian, and let us denote $X_t = X + \sqrt{t}Z$ for $t\geq0$, then the following refinement of the EPI has been proved by Costa \cite{Cos85}
\begin{equation*}
    \N(X_t) \geq (1-t)\N(X_0) + t\N(X_1)\,,\quad \forall t\in[0,1]\,.
\end{equation*}
The latter can be rephrased as the concavity of the entropy power along the stochastic process $(X_t)_{t>0}$, i.e.
\begin{equation*}
    \frac{\dd^2}{\dd t^2}\N(X_t) \leq 0\,.
\end{equation*}
Since $(X_t)_{t>0}$ is a diffusion process whose probability densities $u_t(x)$, $t>0$, are governed by the linear diffusion or heat equation
\begin{align}\label{eq:heat}
    \partial_t u_t = \Delta u_t\,,\quad (x,t)\in\R^d\times(0,+\infty)\,,
\end{align}
one can also say that $\N(X_t)$ is concave along solutions to \eqref{eq:heat}, where $\Delta$ denotes the Laplace operator in $\R^d$. A remarkable proof of the concavity of the EPI was given by Villani in \cite{Vil00}, and we will explore this idea to prove our main result in Section \ref{sec:proof}.

The entropy $\H(X)$ and the heat equation \eqref{eq:heat} are intimately related in a geometric sense. Namely, it has been proved in \cite{JKO98} that the heat equation constitutes a gradient flow of the entropy functional $\widetilde{\H}(u) = -\H(X)$ with respect to a transport distance called $L^2$-Wasserstein distance on the space of probability densities of finite second moment. In notation $\widetilde{\H}(u) = -\H(X)$, $u$ always denotes the density of the corresponding random vector $X$. Informally speaking, we can write equation \eqref{eq:heat} in its Wasserstein gradient flow form \cite{JKO98}
\begin{equation*}
    \partial_t u_t = \diver\left(u_t\nabla \frac{\delta \widetilde{\H}(u_t)}{\delta u_t}\right)\,,
    \quad \frac{\delta \widetilde{\H}(u_t)}{\delta u_t} = \log u_t\,,
\end{equation*}
where $\delta\widetilde{\H}(u)/\delta u$ denotes the variational derivative of the functional $\widetilde{\H}(u)$, $\diver$ denotes the divergence operator, and $\nabla$ the Euclidean gradient in $\R^d$. Here we will use this gradient flow structure in a formal way and interested reader is referred to \cite{AGS05} for details. 

The above described link between the entropy power $\N(X)$ and the heat equation \eqref{eq:heat} was a cornerstone for an extension of the entropy power for R\'enyi entropies proposed by Savar\'e and Toscani in \cite{SaTo14}.
Defining $p$-th R\'enyi entropy power (for $p>1-2/d$) as
\begin{equation}\label{def:REP_ST}
    \st_p(X) = \exp\left(\sigma_p\renyi_p(X)\right)\,,
\end{equation}
where $\sigma_p = 2/d + p-1$ and 
\begin{equation}\label{def:Rp}
    \renyi_p(X) = \frac{1}{1-p}\log\left(\int_{\R^d} u^p(x)\dd x\right)
\end{equation}
is the R\'enyi entropy of order $p>0$, $p\neq1$,
they proved the concavity of $\st_p(X_t)$, i.e.
\begin{equation*}
    \frac{\dd^2}{\dd t^2}\st(X_t) \leq 0\,,
\end{equation*}
along the diffusion process whose probability densities $u_t(x)$, $t>0$, solve the nonlinear diffusion equation
\begin{equation}\label{eq:diff_p}
    \partial_t u_t = \Delta u^p_t\,,\quad (x,t)\in\R^d\times(0,+\infty)\,.
\end{equation}
This result was fruitful for obtaining Gagliardo-Nirenberg type functional inequalities with sharp constants \cite{Tos19} and improved decay rates for convergence of solutions of \eqref{eq:diff_p} to the self-similar profile \cite{CaTo14,Tos14}. In fact the choice of $\sigma_p$ in \eqref{def:REP_ST} comes from the requirement that the functional
\begin{equation*}
    \Q_p(u_t) = \st_p(X_t)\frac{\dd}{\dd t}\renyi_p(X_t)\,,
\end{equation*}
with the time derivative along solutions to \eqref{eq:diff_p},
is invariant with respect to mass conservative dilations, i.e.~$\Q_p(\lambda^d u(\lambda x)) = \Q_p(u(x))$ holds for all $\lambda>0$. Dilation invariance jointly with the self-similarity structure of solutions to \eqref{eq:diff_p} \cite{Vaz07} provides the isoperimetric inequality \cite[cf.~Theorem 2]{SaTo14}, which is the basis of Gagliardo-Nirenberg type inequalities.

Equation \eqref{eq:diff_p} is a well known and studied partial differential equation with plethora of applications \cite{Vaz07}. It also has a geometric interpretation in terms of the $L^2$-Wasserstein distance. For $p>\max(1-1/d,d/(d+2))$, $p\neq1$, it constitutes the $L^2$-Wasserstein gradient flow of the entropy functional $\widetilde{\T}_p(u) = -\T_p(X)$, where 
\begin{equation*}
    \T_p(X) = \frac{1}{1-p}\left(\int_{\R^d}u^p(x)\dd x - 1\right)\,,\quad p\neq 1\,,
\end{equation*}
is the Tsallis entropy of order $p$ \cite{Ott01}. In the gradient flow form, equation \eqref{eq:diff_p} reads
\begin{equation*}
    \partial_t u_t = \diver\left(u_t\nabla \frac{\delta \widetilde{\T}_p(u_t)}{\delta u_t}\right)\,,
    \quad \frac{\delta \widetilde{\T}_p(u_t)}{\delta u_t} = \frac{p}{p-1} u_t^{p-1}\,.
\end{equation*}
Hence, the R\'enyi entropy power $\st_p(X)$ is related to both R\'enyi and Tsallis entropies. We will resolve this ambiguity by introducing generalized entropy powers which will enable us to see a broad picture.

On the other hand, Bobkov and Chistyakov introduced another version of R\'enyi entropy power \cite{BoCh15}, which is a straightforward extension of the Shannon's entropy power, 
\begin{align*}
    \B_p(X) = \exp\left(\frac{2}{d}\renyi_p(X)\right)\,.
\end{align*}
Factor $2/d$ in the exponent makes the functional $\B_p(X)$, likewise $\N(X)$, homogeneous of order two, i.e.~$\B_p(\lambda X) = \lambda^2\B_p(X)$ for all $\lambda\in\R$.
They proved the following entropy power inequality: for $p>1$ and $n\geq3$, let $X_1$, $X_2$, $\ldots\ $, $X_n$ be independent continuous random vectors in $\R^d$, then
\begin{align*}
    \B_p(X_1 + X_2 + \ldots + X_n)\geq \frac{1}{e}p^\frac{1}{p-1}\sum_{k=1}^n\B_p(X_k)\,.
\end{align*}
Contrary to \eqref{ineq:theEPI}, the latter does not hold for $n=2$. A counterexample can be found in \cite{BoCh15}. The case of $p\in(0,1)$ has been recently discussed in \cite{MaMe19}.
Furthermore, in \cite{BoMa17} Bobkov and Marsiglietti extended the R\'enyi entropy power inequality to the following form: given independent continuous random vectors $X$ and $Y$ in $\R^d$, then
\begin{equation}\label{gepi:Bobkov}
    \B_p^\alpha(X+Y) \geq \B_p^\alpha(X) + \B_p^\alpha(Y)
\end{equation}
for all $\alpha\geq(p+1)/2$ and $p>1$.
\begin{remark}
Observe that for $\alpha = d(p-1)/2 + 1$ the $\alpha$-power of the  functional $\B_p(X)$ coincides with the functional $\st_p(X)$, i.e.~according to \eqref{gepi:Bobkov}, for $p>1$ we have the EPI
\begin{align}\label{epi:ST}
    \st_p(X+Y) \geq \st_p(X) + \st_p(Y)\,.
\end{align}
\end{remark}

Inspired by \eqref{gepi:Bobkov} we introduce a two-parameter generali\-zation of the entropy power that we will call
\emph{Sharma-Mittal entropy power of order $(p,q)$} (or simply \emph{Sharma-Mittal entropy power}). 
Its relation with Sharma-Mittal entropies \cite{ShMi75} will be clarified below.
\begin{definition} Let $X$ be a continuous random vector in $\R^d$. For $p > 1 - 2/d$ and $q>0$ we define Sharma-Mittal entropy power of order $(p,q)$ as
\begin{equation}\label{def:smep}
    \smep_{p,q}(X) = \exp\left(\sigma_q\renyi_p(X)\right)\,,
\end{equation}
where $\sigma_q = 2/d + q - 1$, and $\renyi_p(X)$ is the R\'enyi entropy of order $p$.
\end{definition}
\noindent It is apparent from the definition that setting $q = 2(\alpha-1)/d + 1$, for $\alpha\geq (p+1)/2$ and $p>1$, inequality (\ref{gepi:Bobkov})
reads as the EPI for Sharma-Mittal entropy powers
\begin{equation}\label{epi:SM}
    \smep_{p,q}(X+Y) \geq \smep_{p,q}(X) + \smep_{p,q}(Y).
\end{equation}
In particular, for $q=p$, the Sharma-Mittal entropy power $\smep_{p,q}(X)$ coincides with the functional $\st_p(X)$, and inequality \eqref{epi:SM} reduces to \eqref{epi:ST}. If $q=1$, then $\smep_{p,q}(X)$ coincides with the functional $\B_p(X)$, but in this case $q=1$ implies $\alpha=1$, which in further requires $p=1$, hence \eqref{epi:SM} actually reduces to \eqref{ineq:theEPI}.

While the entropy power inequality \eqref{epi:SM} is an immediate consequence of \eqref{gepi:Bobkov} and the definition of $\smep_{p,q}(X)$, our aim in studying this subject was to complement \eqref{epi:SM} with the concavity property of $\smep_{p,q}(X_t)$ . Thus, to extend the result of Savar\'e and Toscani to the wider class of functionals. 
In light of the above interpretation of the concavity of entropy powers along the gradient flows of respective entropy functionals, it is appealing to formally consider the $L^2$-Wasserstein gradient flow 
\begin{equation}\label{gf:sme}
    \partial_t u_t = \diver\left(u_t\nabla \frac{\delta \widetilde{\sme}_{p,q}(u_t)}{\delta u_t}\right)\,,
\end{equation}
of functional $\widetilde{\sme}_{p,q}(u) = -\sme_{p,q}(X)$, 
where
\begin{equation*}
    \sme_{p,q}(X) = \frac{1}{1-q}\left[\left(\int_{\R^d}u(x)^p \dd x\right)^{\frac{1-q}{1-p}} - 1 \right]\,,\quad p,q\neq 1\,,
\end{equation*}
is the Sharma-Mittal entropy of order $(p,q)$ \cite{ShMi75}. Since the variational derivative of $\widetilde{\sme}_{p,q}(u)$ equals
\begin{equation*}
     \frac{\delta \widetilde{\sme}_{p,q}(u)}{\delta u}
     = \frac{p}{p-1}\left(\int_{\R^d}u^p\dd x\right)^{\frac{p-q}{1-p}}u^{p-1}\,,
\end{equation*}
partial differential equation \eqref{gf:sme} is for smooth positive solutions equivalent to the following
nonlinear and non-local diffusion equation
\begin{equation}\label{diff_eq_pq}
    \partial_t u_t = \left(\int_{\R^d}u_t^p\dd x\right)^{\frac{p-q}{1-p}}\Delta u_t^p\,,\quad (x,t)\in\R^d\times(0,+\infty)\,.
\end{equation}
This equation appeared in the literature \cite{FrDa00} in studying related diffusion processes.
Now we can state our main result which provides a remarkable geometric relation between generalized entropies and their powers.
\begin{theorem}\label{tm:main}
Let $p > 1 - 2/d$ and $q>0$, and let $(X_t)_{t\geq0}$ be a diffusion process whose probability densities $u_t(x)$, $t>0$, are smooth, strictly positive and rapidly decaying solutions to equation (\ref{diff_eq_pq}), then
\begin{equation*}
    \frac{\dd^2}{\dd t^2}\smep_{p,q}(X_t) \leq 0\,,\quad t>0\,.
\end{equation*}
\end{theorem}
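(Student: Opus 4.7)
The strategy is to reduce Theorem \ref{tm:main} to the concavity of the classical R\'enyi entropy power $\st_p$ along the local nonlinear diffusion $\partial_s u = \Delta u^p$, proved by Savar\'e and Toscani in \cite{SaTo14}, via a time reparametrization that absorbs the nonlocal prefactor in \eqref{diff_eq_pq}. Concretely, with $F(u) := \int_{\R^d} u^p\,\dd x$ and $\gamma := (p-q)/(1-p)$, I would introduce a new time variable $s=s(t)$ by $\dd s/\dd t = F(u_t)^{\gamma}$. Since $F>0$, this is a smooth strictly increasing change of time, and the rescaled density $\tilde u_s := u_{t(s)}$ then satisfies the \emph{local} nonlinear diffusion $\partial_s \tilde u_s = \Delta \tilde u_s^p$, which is exactly the setting of \cite{SaTo14}.

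Writing $R(t) := \renyi_p(X_t)$ and $\tilde R(s) := \renyi_p(\tilde u_s)$, so that $R(t)=\tilde R(s(t))$, two applications of the chain rule --- combined with $\ddot s(t) = \gamma(1-p)F(u_t)^{2\gamma}\tilde R'(s)$, which follows from differentiating $\dot s = F^{\gamma}$ and using $\dot F/F = (1-p)\dot R = (1-p)\tilde R'(s)\,\dot s$ --- yield the relation
\begin{equation*}
\ddot R(t) + \sigma_q\,\dot R(t)^2 \;=\; F(u_t)^{2\gamma}\Bigl(\tilde R''(s) + \bigl[\gamma(1-p)+\sigma_q\bigr]\tilde R'(s)^2\Bigr).
\end{equation*}
The crucial algebraic observation is
\begin{equation*}
\gamma(1-p) + \sigma_q \;=\; (p-q) + \Bigl(\tfrac{2}{d}+q-1\Bigr) \;=\; \tfrac{2}{d}+p-1 \;=\; \sigma_p,
\end{equation*}
which collapses the bracket to $\tilde R''(s)+\sigma_p\,\tilde R'(s)^2$. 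By Savar\'e-Toscani, this last quantity is non-positive, since it is equivalent (via $\st_p(\tilde u_s) = \exp(\sigma_p\tilde R(s))$ and $\sigma_p>0$) to $(\st_p(\tilde u_s))'' \leq 0$. Multiplying the resulting inequality $\ddot R(t) + \sigma_q\,\dot R(t)^2 \leq 0$ by the positive factor $\sigma_q\,\smep_{p,q}(X_t)$ then produces $(\smep_{p,q}(X_t))''\leq 0$, as required.

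The substantive analytic content is thereby outsourced entirely to the Savar\'e-Toscani theorem; what makes the reduction work is the algebraic coincidence $\gamma(1-p)+\sigma_q=\sigma_p$, which reveals the naturality of the exponent $\sigma_q=2/d+q-1$ in \eqref{def:smep}: it is precisely the choice for which the time rescaling that localizes \eqref{diff_eq_pq} simultaneously lands on the Savar\'e-Toscani exponent. The only points requiring care --- legitimacy of differentiation under the integral sign, and the fact that $\tilde u_s$ inherits smoothness, positivity, and rapid decay from $u_t$ --- are standard under the stated hypotheses; there is no genuine obstacle beyond setting up the reparametrization cleanly and spotting the identity above.
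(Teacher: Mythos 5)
Your proposal is correct, and it takes a genuinely different route from the paper. The paper works directly on the nonlocal equation \eqref{diff_eq_pq}: it first establishes the evolution identities $-\frac{\dd}{\dd t}\E_p(u_t) = ((p-1)\E_p(u_t))^{\frac{p-q}{1-p}}\I_p(u_t)$ and the analogous one for $\I_p$, then computes $\dot\renyi_p$ and $\ddot\renyi_p$ explicitly and reduces the concavity to the functional inequality $\J_p(u_t)\int u_t^p\,\dd x \geq 2(\tfrac1d+p-1)\I_p^2(u_t)$, which is inequality (26) of Savar\'e--Toscani (and for which the paper also supplies a self-contained Villani-style proof by choosing an optimal $\lambda$ in a trace inequality). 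You instead reparametrize time by $\dd s/\dd t = F(u_t)^{\gamma}$ to localize the equation and then invoke the Savar\'e--Toscani \emph{concavity theorem} itself as a black box; the identity $\gamma(1-p)+\sigma_q=\sigma_p$ is exactly the algebraic content that makes this work, and it is a nice conceptual explanation of why $\sigma_q=2/d+q-1$ is the right exponent (the paper motivates this choice only via dilation invariance). Your chain-rule bookkeeping checks out, and $\tilde u_s$ does solve $\partial_s\tilde u_s=\Delta\tilde u_s^p$ with the inherited regularity. What your route buys is brevity and transparency of the reduction; what the paper's route buys is self-containedness (it re-proves the key inequality rather than citing the endpoint theorem). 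One caveat applies equally to both arguments: the final step multiplies by $\sigma_q\,\smep_{p,q}(X_t)$ and needs $\sigma_q>0$, i.e.\ $q>1-2/d$, which for $d\geq 3$ is not implied by the stated hypothesis $q>0$; this is an issue with the theorem's stated parameter range rather than with your argument, since the paper's ``if and only if'' reduction to \eqref{ineq:ccc} tacitly makes the same assumption.
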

\noindent We close this introductory section by few important remarks. 

First, observe that for $p=q$, the Sharma-Mittal entropy $\sme_{p,q}(X)$ equals to the Tsallis entropy $\T_p(X)$, hence equation (\ref{diff_eq_pq}) reduces to the nonlinear diffusion equation \eqref{eq:diff_p}, which is the gradient flow of $\widetilde{\T}_p(u)$. Since the concavity of the R\'enyi entropy power $\st_p(X)$, as called in \cite{SaTo14}, holds along the gradient flow of $\widetilde{\T}_p(u)$, from this perspective it could also be called the Tsallis entropy power. 

Second, on the limit as $q\to1$ the Sharma-Mittal entropy $\sme_{p,q}(X)$ becomes the R\'enyi entropy $\renyi_p(X)$, and equation (\ref{diff_eq_pq}) reduces to
\begin{equation}\label{gf:renyi}
     \partial_t u_t = \left(\int_{\R^d}u_t^p\dd x\right)^{-1}\Delta u_t^p\,,\quad (x,t)\in\R^d\times(0,+\infty)\,.
\end{equation}
which is formally the $L^2$-Wasserstein gradient flow of the R\'enyi entropy fun\-cti\-onal $\widetilde{\renyi}_p(u) = -\renyi_p(X)$ \cite{CaTo14}. Thus, according to Theorem \ref{tm:main}, the R\'enyi entropy power $\B_p(X)$ is concave along
the gradient flow of $\widetilde{\renyi}_p(u)$.

Last, but not least, the choice of $\sigma_q = 2/d + q-1$ in the definition of $\smep_{p,q}(X)$ follows an analogous argument like the choice of $\sigma_p$ in \cite{SaTo14}, as discussed above. Namely, we require that the functional 
\begin{equation*}
    \Q_{p,q}(u_t) = \smep_{p,q}(X_t)\frac{\dd}{\dd t}\renyi_p(X_t)\,,
\end{equation*}
with the time derivative along solutions to \eqref{diff_eq_pq},
is invariant with respect to mass conservative dilations.

In Section \ref{sec:aux} we outline algebraic relation between ge\-ne\-ra\-li\-zed entropies and respective entropy powers, which complements the above stressed geometric relation. In addition, we provide sufficient conditions for the entropy functional $\widetilde{\sme}_{p,q}(u)$ being geodesically convex which, according to the theory developed in \cite{AGS05}, makes the gradient flow structure \eqref{gf:sme} rigorous. The proof of Theorem \ref{tm:main} is given in Section \ref{sec:proof}.

\section{Generalized entropies and respective entropy powers}\label{sec:aux}

The Shannon's idea of axiomatic foundation of the entropy \cite{Sha48} has become a fertile ground for its generalizations. 
In \cite{Ren60} R\'enyi proposed an alteration of the Fadeev's set of axioms for the discrete Shannon's entropy, which in the continuous setting leads to the following entropy of order $p\neq 1$:
\begin{equation}\label{def:Rp2}
    \renyi_p(X) = \frac{1}{1-p}\log\left(\int_{\R^d} u^p(x)\dd x\right).
\end{equation}
Observe that $\lim_{p\to 1}\renyi_p(X) = \H(X)$.
Among others, further generalization has been proposed by Sharma and Mittal \cite{ShMi75}. They introduce a two parameter entropy of order $(p,q)$ as 
\begin{equation*}
    \sme_{p,q}(X) = \frac{1}{1-q}\left[\left(\int_{\R^d}u(x)^p \dd x\right)^{\frac{1-q}{1-p}} - 1 \right]\,,\quad p,q\neq 1\,.
\end{equation*}
Years later, in the framework of non-extensive ther\-mo\-dy\-na\-mics, Tsallis proposed a new family of entropies \cite{Tsa88}
\begin{equation*}
    \T_p(X) = \frac{1}{1-p}\left(\int_{\R^d}u^p(x)\dd x - 1\right)\,,\quad p\neq 1\,.
\end{equation*}
Again observe that $\lim_{p\to 1}\T_p(X) = \H(X)$, and furthermore $\sme_{p,p}(X) = \T_p(X)$ for $p\neq 1$. In the context of Tsallis statistics \cite{Tsa88, PlPl95} it is customary to work with 
$q$-logarithm
\begin{equation*}
    \log_q(s) = \frac{1}{1-q}\left(s^{1-q} - 1\right)\,,\quad s>0\,,
\end{equation*}
where $q\neq 1$, and its inverse, $q$-exponential
\begin{equation*}
    \exp_q(s) = \max\left(1 + (1-q)s, 0\right)^{\frac{1}{1-q}}\,.
\end{equation*}
In this notation we can write
\begin{equation*}
    \sme_{p,q}(X) = \log_q\left(\left(\int_{\R^d}u(x)^p \dd x\right)^{\frac{1}{1-p}}\right),
\end{equation*}
which better reveals similarities with R\'enyi entropies. Taking the $q$-exponential of the latter equation we find
\begin{equation}\label{smep-sme}
   \exp_q\left(\sme_{p,q}(X)\right) = \left(\int_{\R^d}u(x)^p \dd x\right)^{\frac{1}{1-p}},
\end{equation}
and therefore, the Sharma-Mittal entropy power can be written as
\begin{equation}\label{tep-te}
    \smep_{p,q}(X) = \left(\exp_q\left(\sme_{p,q}(X)\right)\right)^{\sigma_q}\,,
\end{equation}
where $\sigma_q = 2/d + q - 1$. This equation gives the direct relation between the Sharma-Mittal entropy and its power.
In particular, for $q=p$ we can write
\begin{equation*}
    \st_{p}(X) = \left(\exp_p\left(\T_{p}(X)\right)\right)^{\sigma_p}\,,
\end{equation*}
which gives the direct relation between the R\'enyi entropy power $\st_p(X)$ and the Tsallis entropy. Both equations \eqref{smep-sme} and \eqref{tep-te} resemble the original definition of the Shannon entropy power \eqref{def:sep}, which can be recovered on the limit as $p,q\to 1$:
\begin{equation*}
    \lim_{p,q\to1}\smep_{p,q}(X) = \lim_{p\to 1}\st_p(X) = \N(X)\,.
\end{equation*}
On the other hand, the R\'enyi entropy power $\B_p(X)$ follows as
\begin{equation*}
    \B_p(X) = \lim_{q\to 1}\smep_{p,q}(X)\,.
\end{equation*}

Although we work only formally with gradient flows, let us briefly mention under which conditions the above gradient flow structure is rigorous.
It has been shown in \cite[Chapter 9]{AGS05} that the functional
\begin{align}\label{def:Ep}
    \E_p(u) = \int_{\R^d}e_p(u(x))\dd x\, \quad \text{ with }\ e_p(z) = \frac{1}{p-1}z^p\,,
\end{align}
is for $p \geq 1 - 1/d$, $p\neq 1$, geodesically convex on the space of probability measures of finite second moment $\mathscr{P}_2(\R^d)$. Roughly speaking this means that $\E_p(u)$ is convex along the geodesic curve $(u_t)_{0\leq t\leq 1}$ connecting any two measures $u_0, u_1\in \mathscr{P}_2(\R^d)$, i.e.
\begin{equation}\label{geo_conv}
    \E_p(u_t) \leq (1-t)\E_p(u_0) + t\E_p(u_1)\,, \quad \forall t\in[0,1]\,.
\end{equation} 
At the expense of rigor, but for the sake of simplicity of exposition, we denote both probability measures and their densities with respect to the Lebesgue measure simply by $u$. 

Now observe that for $p>1$ the Sharma-Mittal entropy functional can be written as
\begin{align*}
    \widetilde{\sme}_{p,q}(u) &= -\log_q\left(\left((p-1)\E_p(u)\right)^{\frac{1}{1-p}}\right) = s_{p,q}\left(\E_p(u)\right)\,,
\end{align*}
where 
\begin{equation*}
    s_{p,q}(z) = -\log_q\left(\left((p-1)z\right)^{\frac{1}{1-p}}\right).
\end{equation*}
Easy calculation gives that $s_{p,q}(z)$ is non-decreasing for $p>1$ and convex for $q\geq p > 1$. Hence, the composition with $\E_p(u)$ and \eqref{geo_conv} yield the geodesic convexity of $\widetilde{\sme}_{p,q}(u)$ when $q\geq p > 1$. Then, according to the theory developed in \cite{AGS05}, in this range of parameters the gradient flow structure \eqref{gf:sme} is well-posed. 

\section{Proof of Theorem \ref{tm:main}}\label{sec:proof}

In order to prove our main result, we closely follow the approach of Savar\'e and Toscani in \cite{SaTo14}. After introducing auxiliary functional $\E_p(u)$ in \eqref{def:Ep}, they also introduce a generalization of the Fisher information
\begin{equation}\label{def:Fp}
    \I_p(u) = \int_{\R^d}\frac{|\nabla u^p|}{u}\dd x
    = \int_{\R^d}u|\nabla e_p'(u)|^2\dd x\,,
\end{equation}
and the second-order functional
\begin{equation}\label{def:Jp}
    \J_p(u) = 2\int_{\R^d}u^p\left(|\nabla^2e_p'(u)|^2 + (p-1)(\Delta e_p'(u))^2\right)\dd x\,.
\end{equation}
The following proposition has been proved in \cite[Proposition 3]{SaTo14}.
\begin{proposition}
Let $u_t(x)$, $t>0$, be smooth, strictly positive and rapidly decaying probability densities solving the nonlinear diffusion equation \eqref{eq:diff_p}, then
\begin{align}\label{eq:dEpST}
    -\frac{\dd}{\dd t}\E_p(u_t) &= \I_p(u_t)\,,\quad t>0\,,\\
    -\frac{\dd}{\dd t}\I_p(u_t) &= \J_p(u_t)\,, \quad t>0\,.\label{eq:dIpST}
\end{align}
\end{proposition}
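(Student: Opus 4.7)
The plan is to follow the Savar\'e--Toscani strategy: write $\smep_{p,q}(X_t)$ as an exponential of the R\'enyi entropy $\renyi_p(X_t)$, and via a time reparametrization reduce the concavity to a $q$-free pointwise functional inequality on $u_t$ that can be imported from their work.

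First I would express
\begin{equation*}
    \smep_{p,q}(X_t) = e^{\sigma_q F(t)},\qquad F(t) := \renyi_p(X_t) = \frac{1}{1-p}\log M_p(u_t),\qquad M_p(u) := \int_{\R^d} u^p\,\dd x,
\end{equation*}
so that
\begin{equation*}
    \frac{\dd^2}{\dd t^2}\smep_{p,q}(X_t) = \sigma_q\,\smep_{p,q}(X_t)\bigl[\sigma_q (F')^2 + F''\bigr].
\end{equation*}
Assuming $\sigma_q > 0$ (automatic for $d \leq 2$; otherwise requiring $q > 1-2/d$), the claim reduces to $\sigma_q(F')^2 + F'' \leq 0$.

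To compute $F'$ and $F''$ along equation (\ref{diff_eq_pq}), I would absorb the nonlocal prefactor by a time change: set $\dd\tau/\dd t = \eta(t) := M_p(u_t)^{(p-q)/(1-p)}$, so that $v(x,\tau) := u(x,t(\tau))$ satisfies the local nonlinear diffusion $\partial_\tau v = \Delta v^p$, i.e.\ \eqref{eq:diff_p}. Proposition 3 together with the chain rule then gives
\begin{equation*}
    \frac{\dd}{\dd t}M_p(u_t) = (1-p)\,\eta(t)\,\I_p(u_t),\qquad \frac{\dd}{\dd t}\I_p(u_t) = -\eta(t)\,\J_p(u_t).
\end{equation*}
Substituting these into $F' = M_p'/((1-p)M_p)$ and differentiating once more, after collecting powers of $M_p$ one arrives at
\begin{equation*}
    F'' + \sigma_q(F')^2 = M_p(u_t)^{2\gamma - 2}\bigl[\,2(p-1+1/d)\,\I_p(u_t)^2 - M_p(u_t)\,\J_p(u_t)\,\bigr],\qquad \gamma := \tfrac{p-q}{1-p}.
\end{equation*}
The striking point is that the coefficient $2(p-1+1/d)$ is \emph{independent of $q$}: the $q$-dependent contribution from the prefactor of $(F')^2$ in $F''$ cancels the $q$-dependent part of $\sigma_q$, via $(2p-q-1) + (2/d + q - 1) = 2(p-1+1/d)$.

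This reduces the theorem to the pointwise functional inequality
\begin{equation*}
    2(p-1+1/d)\,\I_p(u)^2 \leq M_p(u)\,\J_p(u),
\end{equation*}
valid for any smooth, strictly positive, rapidly decaying density $u$. Specializing the computation above to $q = p$ shows that this inequality is exactly equivalent to the concavity $\frac{\dd^2}{\dd \tau^2}\st_p(X_\tau) \leq 0$ along \eqref{eq:diff_p} established by Savar\'e and Toscani, who prove it via integration by parts combined with the pointwise trace estimate $|\nabla^2\varphi|^2 \geq (\Delta\varphi)^2/d$ applied to $\varphi = e_p'(u)$. Invoking their inequality completes the argument. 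I expect the main obstacle to be precisely the $q$-free cancellation above --- without it no reduction to Savar\'e--Toscani would be possible, and extracting it requires careful bookkeeping of the time change and of the competing exponents of $M_p(u_t)$ appearing in $F'$ and $F''$; a secondary subtlety is making sure the assumptions force $\sigma_q > 0$ so that the sign of the reduced inequality matches concavity.
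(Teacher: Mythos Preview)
Your proposal does not address the stated Proposition at all: it is a proof sketch of Theorem~\ref{tm:main} (the concavity of $\smep_{p,q}(X_t)$ along \eqref{diff_eq_pq}), whereas the Proposition asserts the two dissipation identities $-\tfrac{\dd}{\dd t}\E_p(u_t)=\I_p(u_t)$ and $-\tfrac{\dd}{\dd t}\I_p(u_t)=\J_p(u_t)$ along the \emph{local} flow \eqref{eq:diff_p}. You never establish these identities; instead you explicitly \emph{invoke} them (``Proposition~3 together with the chain rule then gives\dots'') in order to compute $\dd M_p/\dd t$ and $\dd\I_p/\dd t$. Read as a proof of the Proposition, the argument is therefore circular.

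For the record, the paper does not prove this Proposition either: it is quoted from \cite{SaTo14}. The actual proof is a direct computation --- differentiate under the integral, use $\partial_t u_t=\diver(u_t\nabla e_p'(u_t))$ and $\nabla u_t^p=u_t\nabla e_p'(u_t)$, and integrate by parts; the first identity is one line, the second is the Bochner-type calculation that produces the Hessian and $(\Delta e_p'(u))^2$ terms defining $\J_p$.

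As a side remark, your sketch of Theorem~\ref{tm:main} is essentially the paper's own argument: your time reparametrization $\dd\tau=\eta(t)\,\dd t$ is exactly the content of Corollary~\ref{corr:main}, and you land on the same $q$-free inequality $2(p-1+1/d)\,\I_p(u)^2 \le M_p(u)\,\J_p(u)$, which is \eqref{ineq:JpIp2}. The $q$-independent cancellation you highlight is precisely the point. Your caveat about needing $\sigma_q>0$ is legitimate (the paper's ``if and only if'' step tacitly uses it), but that is a separate issue from the Proposition.
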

\noindent Identity \eqref{eq:dEpST} actually says that the generalized Fisher information $\I_p(u)$ equals to the production of the Tsallis entropy functional $\widetilde{\T}_p(u)$ along its own gradient flow. Equation \eqref{eq:dIpST} can be interpreted as the production of the Fisher information along the gradient flow of $\widetilde{\T}_p(u)$, which gives the second-order functional $\J_p(u)$. 

Using these results we can prove the following analogous statement.

\begin{corollary}\label{corr:main}
Let $u_t(x)$, $t>0$, be smooth, strictly positive and rapidly decaying probability densities solving the nonlinear diffusion equation \eqref{diff_eq_pq}, then
\begin{align}
    -\frac{\dd}{\dd t}\E_p(u_t) &= \label{eq:dEp} \left((p-1)\E_p(u_t)\right)^{\frac{p-q}{1-p}}\I_p(u_t)\,,\quad t>0\,,\\
    -\frac{\dd}{\dd t}\I_p(u_t) &= \label{eq:dIp} \left((p-1)\E_p(u_t)\right)^{\frac{p-q}{1-p}}\J_p(u_t)\,, \quad t>0\,.
\end{align}
\end{corollary}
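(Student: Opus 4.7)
The plan is to reduce the corollary to the analogous proposition of Savaré--Toscani for \eqref{eq:diff_p} via a time reparametrization. Set
\[
c(t) \;=\; \left(\int_{\R^d} u_t^p\,\dd x\right)^{\frac{p-q}{1-p}} \;=\; \bigl((p-1)\E_p(u_t)\bigr)^{\frac{p-q}{1-p}},
\]
so that \eqref{diff_eq_pq} reads $\partial_t u_t = c(t)\,\Delta u_t^p$. Since the hypotheses guarantee that $u_t$ is smooth, strictly positive, and rapidly decaying, we have $c(t)>0$ and finite for every $t>0$, so that $\tau(t) := \int_0^t c(s)\,\dd s$ is a smooth, strictly increasing change of time variable.

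Defining $\tilde u_\tau := u_{t(\tau)}$, the chain rule gives $\partial_\tau \tilde u_\tau = c(t)^{-1}\partial_t u_t = \Delta \tilde u_\tau^p$, so that $\tilde u_\tau$ solves the local nonlinear diffusion equation \eqref{eq:diff_p}. The regularity of $\tilde u_\tau$ is inherited from that of $u_t$, so the Savaré--Toscani proposition applies and yields
\[
-\frac{\dd}{\dd\tau}\E_p(\tilde u_\tau) \;=\; \I_p(\tilde u_\tau), \qquad -\frac{\dd}{\dd\tau}\I_p(\tilde u_\tau) \;=\; \J_p(\tilde u_\tau).
\]
Since $\E_p$, $\I_p$ and $\J_p$ depend only on the density and not explicitly on time, applying the chain rule once more,
\[
\frac{\dd}{\dd t}\E_p(u_t) \;=\; \tau'(t)\,\frac{\dd}{\dd\tau}\E_p(\tilde u_\tau) \;=\; -c(t)\,\I_p(u_t),
\]
and analogously $\frac{\dd}{\dd t}\I_p(u_t) = -c(t)\,\J_p(u_t)$. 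Rewriting $c(t) = ((p-1)\E_p(u_t))^{(p-q)/(1-p)}$ then gives exactly \eqref{eq:dEp} and \eqref{eq:dIp}.

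There is essentially no obstacle: the sole point requiring care is justifying the time change, which rests on the fact that $c(t)$ is positive and locally bounded for the admissible class of solutions. As a sanity check, when $q=p$ the factor $c(t)\equiv 1$ and the corollary collapses to the Savaré--Toscani identities, while in the R\'enyi limit $q\to 1$ one recovers the gradient flow \eqref{gf:renyi}, where the same reparametrization argument has been used implicitly in \cite{CaTo14}. If one prefers to avoid the time change, the same identities can be obtained by direct integration by parts, differentiating $\E_p(u_t)$ along \eqref{diff_eq_pq} to produce $\int u_t^{p-1}\,c(t)\,\Delta u_t^p\,\dd x$, which factors $c(t)$ out of the Savaré--Toscani calculation line-by-line; but the reparametrization argument is shorter and more transparent.
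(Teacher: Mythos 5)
Your proof is correct, but it takes a different route from the paper. The paper proves \eqref{eq:dEp} by a direct computation: it rewrites \eqref{diff_eq_pq} in divergence form as $\partial_t u_t = c(t)\,\nabla\cdot(u_t\nabla e_p'(u_t))$ with $c(t)=\bigl(\int u_t^p\,\dd x\bigr)^{(p-q)/(1-p)}$, differentiates $\E_p(u_t)$ under the integral sign, and integrates by parts so that the constant-in-$x$ factor $c(t)$ simply comes out in front of $\I_p(u_t)$; identity \eqref{eq:dIp} is then obtained by the same factoring applied to the Savar\'e--Toscani computation behind \eqref{eq:dIpST}. That is precisely the ``direct integration by parts'' alternative you sketch in your last sentence. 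Your primary argument --- the time reparametrization $\tau(t)=\int c(s)\,\dd s$ reducing \eqref{diff_eq_pq} to \eqref{eq:diff_p} and pulling the identities back through the chain rule --- is a legitimately different and arguably more transparent mechanism: it explains in one stroke why the nonlocal prefactor multiplies \emph{both} right-hand sides, and it reuses the Savar\'e--Toscani proposition as a black box rather than reopening its proof. The only points deserving a little care, which you essentially acknowledge, are (i) that $c$ is positive, finite and continuous on $(0,\infty)$ so the change of variables is admissible (for $t>0$ one should base the integral at some $t_0>0$ rather than at $0$, where $c$ need not be integrable), and (ii) that the reparametrized densities remain in the smooth, strictly positive, rapidly decaying class required by the proposition --- both immediate from the hypotheses. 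Neither approach is stronger than the other; yours is shorter, the paper's is more self-contained for \eqref{eq:dEp}.
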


\begin{proof}
Observe that for smooth and strictly positive solutions equation \eqref{diff_eq_pq} can be equivalently written as
\begin{equation*}
    \partial_tu_t = \left(\int_{\R^d}u_t^p\dd x\right)^{\frac{p-q}{1-p}}\nabla\cdot(u\nabla e_p'(u_t))\,.
\end{equation*}
Thus, assuming in addition rapid decay of solutions to \eqref{diff_eq_pq} we can freely integrate by parts and calculate:
\begin{align*}
    \frac{\dd}{\dd t}\E_p(u_t) &= \int_{\R^d}e_p'(u_t)\partial_tu_t\\
    &= \left(\int_{\R^d}u_t^p\dd x\right)^{\frac{p-q}{1-p}}\int_{\R^d}e_p'(u_t)\nabla\cdot\left(u\nabla e_p'(u_t)\right)\dd x\\
    &= -\left(\int_{\R^d}u_t^p\dd x\right)^{\frac{p-q}{1-p}}\int_{\R^d}u |\nabla e_p'(u_t)|^2\dd x\,.
\end{align*}
Identity \eqref{eq:dEp} then obviously follows from definitions \eqref{def:Ep}
and \eqref{def:Fp}. Identity \eqref{eq:dIp} follows in analogous straightforward way from \eqref{eq:dIpST}. 
\end{proof}

\noindent Recall the definition of the Sharma-Mittal entropy power \eqref{def:smep},
\begin{equation*}
    \smep_{p,q}(X) = \exp\left(\sigma_q\renyi_p(X)\right),
\end{equation*}
where $\sigma_q = 2/d + q-1$.
Taking a stochastic process $(X_t)_{t\geq0}$, we simply calculate
\begin{align*}
    \frac{\dd}{\dd t}\smep_{p,q}(X_t) &= \sigma_q\smep_{p,q}(X_t)\frac{\dd}{\dd t}\renyi_p(X_t)\,,\\
    \frac{\dd^2}{\dd t^2}\smep_{p,q}(X_t) &= \sigma_q\smep_{p,q}(X_t)\left(\sigma_q\left(\frac{\dd \renyi_p(X_t)}{\dd t}\right)^2
    + \frac{\dd^2\renyi_p(X_t)}{\dd t^2} \right).
\end{align*}
Therefore, $\smep_{p,q}(X_t)$ is concave, i.e.~$\dd^2\smep_{p,q}(X_t)/\dd t \leq 0$ if and only if 
\begin{equation}\label{ineq:ccc}
    -\frac{\dd^2\renyi_p(X_t)}{\dd t^2} \geq \sigma_q\left(\frac{\dd \renyi_p(X_t)}{\dd t}\right)^2\,.
\end{equation}
Writing the R\'enyi entropy $\renyi_p(X)$ in terms of the functional $\E_p(u)$ as 
\begin{equation}
    \renyi_p(X) = \frac{1}{1-p}\log\left((p-1)\E_p(u)\right),
\end{equation}
we further calculate
\begin{align*}
    \frac{\dd }{\dd t}\renyi_p(X_t) &= \frac{1}{1-p}\frac{\dfrac{\dd}{\dd t}\E_p(u_t)}{\E_p(u_t)}\,, \\
    \frac{\dd^2 }{\dd t^2}\renyi_p(X_t) &= \frac{1}{1-p}\left(\frac{\dfrac{\dd^2}{\dd t^2}\E_p(u_t)}{\E_p(u_t)} - \left(\frac{\dfrac{\dd}{\dd t}\E_p(u_t)}{\E_p(u_t)}\right)^2
    \right).
\end{align*}
If $(X_t)_{t\geq0}$ is a stochastic process whose density function $u_t(x)$ solves \eqref{diff_eq_pq}, then employing the identities from Corollary \ref{corr:main} it follows
\begin{align*}
    \frac{\dd }{\dd t}\renyi_p(X_t) &= \left((p-1)\E_p(u_t)\right)^{\frac{2p-q-1}{1-p}}\I_p(u_t)\,, \\
    \frac{\dd^2 }{\dd t^2}\renyi_p(X_t) &= \left((p-1)\E_p(u_t)\right)^{\frac{3p-2q-1}{1-p}}
    \left(\frac{2p-q-1}{p-1}\frac{\I_p^2(u_t)}{\E_p(u_t)} - \J_p(u_t)
    \right).
\end{align*}
Concavity condition \eqref{ineq:ccc} then becomes equivalent to 
\begin{equation*}
    \J_p(u_t) \geq (\sigma_q + 2p-q-1)\frac{\I_p^2(u_t)}{(p-1)\E_p(u_t)}\,,
\end{equation*}
which can be further written as
\begin{equation}\label{ineq:JpIp2}
    \J_p(u_t)\left(\int_{\R^d}u_t^p\dd x\right) \geq 2\left(\frac{1}{d} + p - 1\right)\I_p^2(u_t)\,.
\end{equation}
The last inequality has been demonstrated in \cite[cf.~inequality (26)]{SaTo14}, hence, the proof of Theorem \ref{tm:main} is finished. 

However, we provide an alternative proof of \eqref{ineq:JpIp2} which takes the idea from the Villani's proof of the concavity of the Shannon's entropy power $\N(X_t)$ \cite{Vil00}.
Elementary trace inequality $(\Tr(A))^2/d \leq |A|^2$ for $A\in\R^{d\times d}$ gives us that
\begin{equation*}
    \frac{1}{d}(\Delta e_p'(u) + d\lambda)^2 \leq |\nabla^2e_p'(u) + \lambda I|^2
\end{equation*}
for arbitrary $\lambda\in\R$, where $I$ denotes the identity matrix $d\times d$.
Hence, for every $\lambda\in\R$ and $p > 1 - 1/d$ it holds
\begin{align*}
    0\leq \int_{\R^d}u^p\left(|\nabla^2e_p'(u) + \lambda I|^2 + (p-1)(\Delta e_p'(u) + d\lambda)^2\right)\dd x.
\end{align*}
Expanding the right hand side we obtain
\begin{align*}
   0&\leq \int_{\R^d}u^p\left(|\nabla^2e_p'(u)|^2 + 2\lambda\Delta e_p'(u) + d\lambda^2\right)\dd x \\
    &\quad +(p-1)\int_{\R^d}u^p\left((\Delta e_p'(u))^2 + 2d\lambda\Delta e_p'(u) + d^2\lambda^2\right)\dd x\,.
\end{align*}
Integrating by parts and using the fact that $\nabla u^p = u\nabla e'_p(u)$ we arrive to
\begin{align*}
    0&\leq \int_{\R^d}u^p|\nabla^2e_p'(u)|^2\dd x - 2\lambda \I_p(u) + d\lambda^2\int_{\R^d}u^p\dd x \\
    &\quad +(p-1)\int_{\R^d}u^p(\Delta e_p'(u))^2\dd x 
    - 2d\lambda(p-1)\I_p(u) + d^2\lambda^2(p-1)\int_{\R^d}u^p\dd x\,.
\end{align*}
Now taking 
\begin{align*}
    \lambda = \dfrac{\I_p(u)}{d\int_{\R^d}u^p\dd x}\,,
\end{align*}
the inequality becomes
\begin{align*}
    0&\leq \int_{\R^d}u^p|\nabla^2e_p'(u)|^2\dd x - \frac{1}{d}\dfrac{\I_p^2(u)}{\int_{\R^d}u^p\dd x} 
    +(p-1)\int_{\R^d}u^p(\Delta e_p'(u))^2\dd x - (p-1)\dfrac{\I_p^2(u)}{\int_{\R^d}u^p\dd x}\,.
\end{align*}
Multiplying the latter by $\displaystyle 2\int_{\R^d}u^p\dd x$ and rearranging terms we find
\begin{align*}
    2\int_{\R^d}u^p\left(|\nabla^2e_p'(u)|^2 +(p-1)(\Delta e_p'(u))^2\right)\dd x\left( \int_{\R^d}u^p\dd x\right)
    \geq 2\left(\frac{1}{d} + p - 1\right)\I_p^2(u)\,,
\end{align*}
which is exactly inequality \eqref{ineq:JpIp2}.

\section{Conclusion}

In this note we introduced generalized entropy powers, called Sharma-Mittal entropy powers, and complemented the result of Bobkov and Marsiglietti \cite{BoMa17} by proving the concavity of Sharma-Mittal entropy powers along the $L^2$-Wasserstein gradient flows of the corresponding Sharma-Mittal entropy functionals. Thus, we generalized the result of Savar\'e and Toscani \cite{SaTo14}, which could be from our perspective restated as "the concavity of the Tsallis entropy power". Since the R\'enyi entropy power proposed by Bobkov and Chistyakov \cite{BoCh15} is geometrically related to the R\'enyi entropy, in this way we dissolved the ambiguity of the R\'enyi entropy power in the literature. To conclude, the contribution of our result and its settlement in the literature is best seen from the following table.
\begin{table}[H]
    \begin{center}
    \begin{tabular}{|c|c|c|c|c|}\hline
        \thead{~} & $\N(X)$ & $\st_p(X)$ & $\B_p(X)$ & $\smep_{p,q}(X)$ \\ \hline
        \thead{EPI} & \makecell{Stam \\ \cite{Sta59}} & \makecell{Bobkov and\\ Marsiglietti\\ \cite{BoMa17}} & \makecell{Bobkov and \\ Chistyakov\\ \cite{BoCh15}} & \makecell{Bobkov and\\ Marsiglietti\\ \cite{BoMa17}}\\ \hline
        \thead{concavity} & \makecell{Costa \cite{Cos85},\\ Villani \cite{Vil00}} & \makecell{Savar\'e and \\ Toscani\\ \cite{SaTo14} }& \makecell{Theorem \ref{tm:main}, \\ $q=1$} & Theorem \ref{tm:main}\\ \hline 
    \end{tabular}
    \end{center}
    \caption{Contribution and settlement of our result in the literature}
    \label{tab:my_label}
\end{table}

\section*{Acknowledgments}
This work has been supported by the Croatian Science Foundation under project UIP-05-2017-7249 (MANDphy).


\begin{IEEEbiographynophoto}{Mario Bukal}
is an Associate Professor of Mathematics at the University of Zagreb Faculty of Electrical Engineering and Computing since 2016. He obtained his PhD in mathematics from the Vienna University of Technology in 2012. His research interest are in entropy methods for diffusion equations, scaling limits in continuum mechanics and information fusion.
\end{IEEEbiographynophoto}

\vfill

\end{document}